%% file: main.tex
\documentclass[sigconf,nonacm]{acmart}
\usepackage{manuscript}

\begin{document}
\input{frontmatter.tex}

\input{sections/01_introduction.tex}
\input{sections/02_related_work.tex}
\input{sections/03_data_measurement.tex}
\input{sections/04_methodology.tex}
\input{sections/04_growth_accounting.tex}
\input{sections/05_author_population.tex}
\input{sections/06_individual_productivity.tex}
\input{sections/09_discussion.tex}
\input{sections/10_conclusion.tex}

\bibliographystyle{ACM-Reference-Format}
\bibliography{references}

\appendix
\input{appendices/a_external_comparison.tex}

\input{appendices/b_author_demography.tex}

\input{appendices/d_author_identity.tex}
\input{appendices/c_output_distribution.tex}

\input{appendices/e_sources_sample_fields.tex}
\input{appendices/f_accounting_proofs.tex}

\end{document}

%% file: frontmatter.tex
\title{Behind the Paper Explosion: Author Growth and Widening Gaps in Computer Science}

\author{Jiayi Li}
\affiliation{%
  \institution{National University of Singapore}
  \city{Singapore}
  \country{Singapore}
}
\author{Bowen Niu}
\affiliation{%
  \institution{National University of Singapore}
  \city{Singapore}
  \country{Singapore}
}
\author{Bingsheng He}
\affiliation{%
  \institution{National University of Singapore}
  \city{Singapore}
  \country{Singapore}
}
\author{Zhaomin Wu}
\affiliation{%
  \institution{National University of Singapore}
  \city{Singapore}
  \country{Singapore}
}

\begin{abstract}
The rapid growth of computer science (CS) papers increases the work needed to review research and find relevant studies. The growing use of AI tools in scientific writing makes it increasingly important to identify the authors behind this growth. Previous studies have tracked publication growth and followed researchers' careers, documenting the increasing number of authors and the many researchers who publish only briefly. However, annual averages of papers per author reflect different sets of researchers and may hide changes among the same authors. This study addresses this gap by linking individual changes to overall CS paper growth using arXiv, DBLP, and conference submission records for 2010--2025. With each paper divided equally among its authors, the total paper count equals the number of authors multiplied by their average papers per author. This relationship provides a decomposition of the two contributions of paper growth.
This decomposition attributes nearly all growth in CS paper volume to an increase in the number of publishing authors. The analysis of author numbers further distinguishes expansion through new and returning arXiv authors from the replacement of authors no longer publishing there. Changes in who publishes also affect the average, motivating comparisons of the same authors over time within groups with similar past paper counts. This comparison identifies substantial replacement within the growing author group. New authors average fewer papers under this counting method, keeping the overall average nearly unchanged despite increases among authors who published in both years. Moreover, statistics show that these increases are uneven, with widening gaps even among authors who previously published similar numbers of papers. These findings provide insights into peer review, community participation, researcher evaluation, and equity within the CS community.
\end{abstract}

\begin{CCSXML}
<ccs2012>
 <concept>
  <concept_id>10002951.10003260.10003309</concept_id>
  <concept_desc>Information systems~Digital libraries and archives</concept_desc>
  <concept_significance>500</concept_significance>
 </concept>
 <concept>
  <concept_id>10010405.10010455</concept_id>
  <concept_desc>Applied computing~Sociology</concept_desc>
  <concept_significance>300</concept_significance>
 </concept>
 <concept>
  <concept_id>10002944.10011123</concept_id>
  <concept_desc>General and reference~Empirical studies</concept_desc>
  <concept_significance>300</concept_significance>
 </concept>
</ccs2012>
\end{CCSXML}

\ccsdesc[500]{Information systems~Digital libraries and archives}
\ccsdesc[300]{Applied computing~Sociology}
\ccsdesc[300]{General and reference~Empirical studies}

\keywords{scholarly Web communities, computer science, scientific production, arXiv, author demography, growth attribution}

\maketitle

%% file: sections/01_introduction.tex
\section{Introduction}
\label{sec:intro}

\begin{figure}[ht]
\centering
\includegraphics[width=\columnwidth]{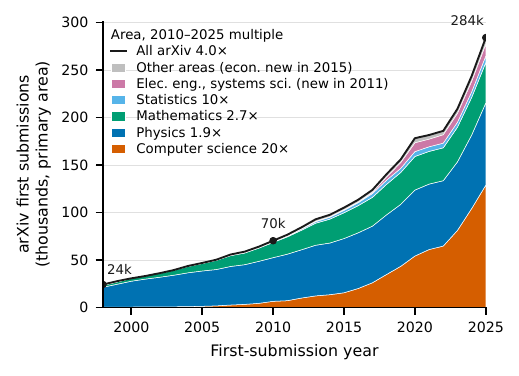}
\Description{Annual arXiv first submissions stacked by primary area from 1998 to 2025, showing the expansion of computer science within the archive.}
\caption{Annual arXiv first submissions by primary area.}
\label{fig:arxiv-area-composition}
\end{figure}

Scholarly Web communities face growing pressure to review and organize an expanding volume of computer science (CS) research \cite{Shah2018,Hanson2024,Kim2025}. Their platforms coordinate preprint sharing, conference peer review, and research discovery. CS has become the single largest driver of arXiv's expansion (Figure~\ref{fig:arxiv-area-composition}). Generative AI adds urgency to understanding this growth, as controlled experiments demonstrate substantial time savings in writing tasks and large-scale text analyses confirm widespread use of large language models (LLMs) in recent scientific manuscripts \cite{NoyZhang2023,Liang2025}. Effective responses depend on distinguishing expansion of the author community from changes in individual output. This distinction connects the growth in demand for peer review to the changing population that can contribute to it.

Prior studies have clarified how community expansion and collaboration contribute to publication growth. Bibliometric analyses show that the number of CS authors grows as fast as or faster than the number of papers, and that output per author under fractional or first-author counting stays nearly stable as teams grow \cite{Blackburn2019,Fanelli2016}. Career studies further establish that participation changes through substantial workforce turnover \cite{Milojevic2018,Boothby2022}, while recent work documents the rise of highly prolific authors \cite{AzadBanu2024}. Together, these findings establish the importance of both community membership and the distribution of author output. However, they leave unresolved how authors entering or leaving the publishing population affect average papers per author during the recent CS surge. A nearly stable mean can conceal changes among continuing authors, and annual rankings cannot establish whether authors with similar past output subsequently follow increasingly unequal trajectories.

To connect population growth with individual trajectories, we analyze CS papers from 2010 to 2025 using arXiv, DBLP publications, and ICLR submissions. An exact accounting identity expresses paper counts as the number of publishing authors multiplied by average paper output, separating author growth from changes in output per author under fractional and first-author counting. Comparisons across publication stages and with Mathematics and Physics assess whether the same pattern holds across sources and fields. We then trace entry, return, and inactivity to explain the net change in the arXiv-CS author population, using linked OpenAlex and ORCID records to distinguish platform entry from entry to research. Because turnover also changes the authors represented in the population mean, we define historical groups by 2018--2020 output and compare the same continuing authors through 2025 to assess whether authors with similar past output follow different trajectories.

Our analysis shows that growth in the number of publishing authors accounts for nearly all paper growth across preprints, conference submissions, and formal publications. Author growth remains the main source of paper growth on arXiv after 2022 and explains most of the faster growth of CS relative to Mathematics and Physics. Within arXiv CS, this expansion involves substantial replacement as new entrants and returning authors replace the output of authors inactive in 2025 and enlarge the population. Most entrants already have a publication history, so platform expansion draws heavily on researchers with prior experience. This turnover also keeps mean fractional output nearly constant because the large influx of lower-output entrants offsets gains among continuing authors. Those gains are uneven even among authors with similar past output, whose output differences widen.

The contribution of this paper can be summarized as below.
\begin{itemize}
\item We develop an analytical framework linking paper growth to author turnover and individual output trajectories.
\item We find that author expansion drives paper growth and output gaps widen even among authors with similar past output. We also report findings on author turnover, entry histories, and retention.
\item We derive insights from these findings for reviewer recruitment, researcher retention, and research evaluation in the CS community.
\end{itemize}

%% file: sections/02_related_work.tex
\section{Related Work}
\label{sec:related}

\input{tables/related_work_comparison.tex}

Studies of publication growth have shown that the number of authors and the way credit is shared explain much of the rise in paper counts. Long-run analyses document growing publication volume \cite{BornmannMutz2015} and larger research teams \cite{Wuchty2007}, and studies of CS find that author counts grow at least as fast as publication counts \cite{Cavero2014,Fernandes2017,Blackburn2019,Demetrescu2022}. Blackburn et al.\ also track fractional output per author and its percentiles \cite{Blackburn2019}, and Fanelli and Larivi\`ere find that the fractional and first-author output of early-career researchers has stayed nearly stable while their publication lists have grown longer \cite{Fanelli2016}. These studies, however, analyze accepted or formal publications, and their data end before 2022. They cannot show whether the same pattern holds for preprints and conference submissions or whether it continues through the recent surge. Table~\ref{tab:related-work-comparison} summarizes the scope of representative studies.

Research on scientific careers has documented high turnover in the workforce and many short publishing careers. Longitudinal analyses separate a continuously publishing core from transient researchers \cite{Ioannidis2014}, document shorter careers and a growing temporary workforce \cite{Milojevic2018}, and measure the entry, departure, and replacement of early-career researchers \cite{Boothby2022}. Work on CS and Web communities traces shifting topics, collaboration networks, faculty hiring, and movement across research domains \cite{Menezes2009,Ying2015,Way2016,Pramanik2019,Jin2021,GrauxOrlandi2022}. The career studies, however, observe entry and exit within a single bibliographic database, where an author's first appearance cannot show whether the author is new to research or has published elsewhere. Neither line of work relates these changes in membership to the growth in paper counts.

Research on prolific authors and AI assistance has identified changes in the upper tail of output and new ways to accelerate writing. Bibliometric studies document a surge of highly prolific authors \cite{Moreira2023,AzadBanu2024} and examine associations between AI adoption, submissions, and output \cite{Kusumegi2025,Tang2025,Matsui2026,Gartenberg2026}. However, annual counts of prolific authors compare changing populations and cannot establish whether the same authors publish more. Studies of AI assistance examine a possible source of rising output, with evidence of productivity gains in experiments and workplaces \cite{NoyZhang2023,Brynjolfsson2025}, while lexical analyses establish the growing presence of LLM-associated text in manuscripts and reviews \cite{Liang2024,Liang2025,Kobak2025,Qi2025,Cunningham2025,HeBu2026}. Author-level studies of AI adoption track changes among the same authors, although dating adoption by the first detected AI-assisted paper can select periods of high output and bias estimated gains upward \cite{Renault2026}. Connecting individual trajectories to aggregate growth further requires distinguishing changes among comparable authors from changes in population membership.

%% file: tables/related_work_comparison.tex
\begin{table*}
\centering
\caption{Analytical scope of representative studies. \ding{51} marks a covered dimension and \ding{55} a dimension outside the study's main analysis.}
\label{tab:related-work-comparison}
\begin{tabular*}{\textwidth}{@{\extracolsep{\fill}}lcccccc@{}}
\toprule
\textbf{\boldmath Study} & \textbf{\boldmath \shortstack{Paper and\\author counts}} & \textbf{\boldmath \shortstack{Fractional\\output}} & \textbf{\boldmath \shortstack{Entry\\cohorts}} & \textbf{\boldmath \shortstack{Prior publication\\histories}} & \textbf{\boldmath \shortstack{Historical\\output groups}} & \textbf{\boldmath \shortstack{Post-2022\\period}} \\
\midrule
Blackburn et al.\ \cite{Blackburn2019} & \ding{51} & \ding{51} & \ding{55} & \ding{55} & \ding{55} & \ding{55} \\
Fanelli and Larivi\`ere \cite{Fanelli2016} & \ding{55} & \ding{51} & \ding{51} & \ding{55} & \ding{55} & \ding{55} \\
Milojevi\'c et al.\ \cite{Milojevic2018} & \ding{55} & \ding{55} & \ding{51} & \ding{55} & \ding{55} & \ding{55} \\
Azad and Banu \cite{AzadBanu2024} & \ding{51} & \ding{55} & \ding{55} & \ding{55} & \ding{55} & \ding{55} \\
Kusumegi et al.\ \cite{Kusumegi2025} & \ding{55} & \ding{55} & \ding{55} & \ding{55} & \ding{55} & \ding{51} \\
\midrule
\textbf{Our study} & \ding{51} & \ding{51} & \ding{51} & \ding{51} & \ding{51} & \ding{51} \\
\bottomrule
\end{tabular*}
\end{table*}

%% file: sections/03_data_measurement.tex
\section{Problem Definition}
\label{sec:data}
\label{sec:problem-definition}
\label{sec:notation}
\label{sec:source-records}

We study how changes in author participation and individual output account for the rapid growth of CS papers from 2010 to 2025 (Figure~\ref{fig:arxiv-area-composition}). The central problem is to distinguish growth in the number of publishing authors from changes in output per author, and to determine how author turnover affects the relationship between this average and individual output. The rest of this section defines the scope of this problem and the terms it uses.

We study the CS papers recorded in three sources, which cover three stages of scholarly communication, namely preprints, conference submissions, and formal publications. On arXiv, our primary source, a CS paper is a preprint with at least one of the 40 \texttt{cs.*} categories, counted once across its versions and dated by the year of its first version. In DBLP, a CS paper is any journal article or conference paper that DBLP indexes, dated by its publication year and excluding CoRR preprints and informal records. For ICLR, a paper is a submission to one edition of the conference, whether or not it was accepted. Because cross-listed arXiv papers carry several categories, a paper can belong to more than one field. \emph{Broad AI} contains the CS papers with at least one of eight AI-related categories, and Mathematics and Physics are comparison fields defined in the same way (Appendix~\ref{sec:app-sources}). No source is filtered by country or institution. OpenAlex and ORCID supply authors' publications outside these sources and add no papers to the counts.

The papers of a source and field in year $t$ form the set $\mathcal{P}_t$, with \emph{paper count} $P_t=|\mathcal{P}_t|$, and \emph{growth} is the change in this count from an early period to a late period. We summarize each period by the geometric mean of its annual paper counts and report growth as the ratio of the late-period mean to the early-period mean. Section~\ref{sec:growth-decomposition} gives the corresponding logarithmic measure. Long-run comparisons use the periods 2010--2012 and 2023--2025 for arXiv and the 2017 and 2025 editions for ICLR. The DBLP late period is 2021--2023 because recent publications are still being indexed. A comparison of 2022 with 2025 covers the period after the release of ChatGPT, and analyses of individual authors compare 2021 with 2025. Figure~\ref{fig:arxiv-area-composition} differs from these measures because it assigns each arXiv record to its primary area and compares the single years 2010 and 2025.

We identify distinct authors across papers through identity resolution (Section~\ref{sec:identity}). The \emph{active authors} of a source, field, and year are the distinct authors of at least one of its papers. They form the set $\mathcal{A}_t$, with $A_t=|\mathcal{A}_t|$. An author's \emph{individual output} is the number of papers credited to the author in a year, and Section~\ref{sec:growth-decomposition} specifies how coauthored papers are credited. \emph{Output per author} is the average individual output of the active authors. An \emph{arXiv-CS entrant} is an author in the year of their first arXiv-CS paper, which we call their \emph{entry year}. Because an entrant may have published elsewhere before, their publications dated before the entry year in external records form their \emph{prior publication history}. An author is a \emph{first-time CS author} in the year of their first CS publication in arXiv or these records, and \emph{retention} describes whether first-time CS authors keep publishing CS papers in the following years.

%% file: sections/04_methodology.tex
\section{Methodology}
\label{sec:methodology}
\label{sec:framework}

An exact accounting identity separates paper growth into author growth and changes in output per author, which we interpret through population turnover and individual trajectories.

\subsection{Growth Decomposition}
\label{sec:analytical-methods}
\label{sec:growth-decomposition}

An exact decomposition requires each paper to contribute one unit of total credit. Let $\mathcal{B}(p)$ be the authors of paper $p$ and $n_p=|\mathcal{B}(p)|$ its \emph{team size}. \emph{Fractional counting} assigns each author an equal share $1/n_p$. Author $a$'s annual \emph{fractional output} is $q^A_{a,t}=\sum_{p\in\mathcal{P}_t:\,a\in\mathcal{B}(p)}1/n_p$, and \emph{mean fractional output} $Q^A_t$ averages this quantity over $\mathcal{A}_t$.

\emph{First-author counting} assigns each paper to the author listed first. The \emph{active first authors} form the set $\mathcal{F}_t$, with $F_t=|\mathcal{F}_t|$. An author's \emph{first-author output} $q^F_{a,t}$ counts papers listing that author first, and $Q^F_t$ averages this output over $\mathcal{F}_t$. These two averages are the \emph{population means} under their respective rules. Figures label fractional counting as all-author and first-author counting as first-listed.

The relation to team size follows from \emph{whole counting}, which credits each coauthor with one paper. The total number of author appearances is $H_t=\sum_{p\in\mathcal{P}_t}n_p$. Mean whole-count output is $W_t=H_t/A_t$, and mean team size is $\bar n_t=H_t/P_t$.

\begin{proposition}[Paper-output accounting identities]
\label{prop:paper-accounting}
For a finite, nonempty annual paper set with at least one recorded author per paper, the counting rules above satisfy
\begin{equation}
P_t=A_tQ^A_t=F_tQ^F_t,\qquad Q^A_t=\frac{W_t}{\bar n_t}.
\label{eq:central-decomposition}
\end{equation}
\end{proposition}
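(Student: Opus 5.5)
The plan is a double-counting argument, exchanging the order of summation over the finite set of author--paper incidences $\{(a,p): p\in\mathcal{P}_t,\ a\in\mathcal{B}(p)\}$. First I check that every quantity is well defined. Since $\mathcal{P}_t$ is nonempty and each paper has at least one recorded author, we have $n_p\ge 1$ for every $p$, so the shares $1/n_p$ exist. The set $\mathcal{A}_t=\bigcup_p\mathcal{B}(p)$ is finite and nonempty, so $A_t\ge 1$. Every paper has a first-listed author, so $\mathcal{F}_t$ is nonempty and $F_t\ge1$. Finally $H_t\ge P_t\ge1$. Hence all the averages $Q^A_t$, $Q^F_t$, $W_t$, $\bar n_t$ are finite ratios with positive denominators.

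For the fractional identity, I expand $A_tQ^A_t=\sum_{a\in\mathcal{A}_t}q^A_{a,t}=\sum_{a\in\mathcal{A}_t}\sum_{p\ni a}1/n_p$ and swap the sums. Swapping is legitimate because the double sum runs over a finite incidence set. It gives
\[
\sum_{p\in\mathcal{P}_t}\sum_{a\in\mathcal{B}(p)}\frac{1}{n_p}=\sum_{p}\frac{n_p}{n_p}=P_t .
\]
This uses that $\mathcal{B}(p)$ is a set of distinct authors, so each paper contributes exactly $n_p$ incidences. It also uses that every author of a paper in $\mathcal{P}_t$ lies in $\mathcal{A}_t$, so no incidences are lost. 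The first-author identity works the same way with the incidence set replaced by the map $p\mapsto$ first author of $p$. We have $F_tQ^F_t=\sum_{a\in\mathcal{F}_t}q^F_{a,t}$, and this counts each paper exactly once. The reason is that the fibres of this map partition $\mathcal{P}_t$, and $\mathcal{F}_t$ is exactly the image of the map. The sum therefore equals $P_t$.

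For the team-size relation, I first apply the same swap to whole counting, which gives $\sum_a(\#\text{papers of }a)=\sum_p n_p=H_t$. This confirms that $W_t=H_t/A_t$ is indeed the mean whole-count output. Then, using $P_t=A_tQ^A_t$,
\[
\frac{W_t}{\bar n_t}=\frac{H_t/A_t}{H_t/P_t}=\frac{P_t}{A_t}=Q^A_t ,
\]
where the cancellation needs $H_t>0$, which was established above.

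No step is a real obstacle. The only point that needs care is bookkeeping about the definitions. Authors with zero credit must not enter $\mathcal{A}_t$ or $\mathcal{F}_t$, and this holds by their definitions as the authors who actually appear. Identity resolution must produce a function from papers to sets of distinct resolved authors, so that a duplicated name on one paper does not make $n_p$ disagree with $|\mathcal{B}(p)|$. I would state explicitly that $\mathcal{B}(p)$ is a set and that the first author is a single element of $\mathcal{B}(p)$; with that stated, both exchanges of summation are exact.
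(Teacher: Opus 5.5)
Your proposal is correct and follows essentially the same route as the paper's proof. Both exchange the order of the finite sums to show that each paper contributes one unit of credit under either counting rule, then cancel $H_t$ in $W_t/\bar n_t$. Your extra bookkeeping (positive denominators, and $\mathcal{B}(p)$ as a set of distinct authors that contains the first author) spells out what the paper compresses into its closing remark that all denominators are positive.
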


These annual identities provide the basis for comparing changes across periods. For a positive annual quantity $X_t$, define $g_X=\log\bar X_{\tau_1}-\log\bar X_{\tau_0}$, where $\bar X_\tau$ is the geometric mean over the years of period $\tau$. The growth multiple is $e^{g_X}$. Annualized growth divides $100\,g_X$ log points by the years between period midpoints.

\begin{corollary}[Growth decomposition]
\label{cor:growth-decomposition}
For positive annual quantities evaluated over the same comparison periods, paper growth satisfies
\begin{align}
g_P&=g_A+g_{Q^A},\label{eq:all-author}\\
g_P&=g_F+g_{Q^F}.\label{eq:first-author}
\end{align}
For $g_P\neq0$, the \emph{contribution shares} satisfy
\begin{equation}
\begin{aligned}
\frac{g_A}{g_P}+\frac{g_{Q^A}}{g_P}&=1,\\
\frac{g_F}{g_P}+\frac{g_{Q^F}}{g_P}&=1.
\end{aligned}
\label{eq:contribution-shares}
\end{equation}
The first term represents author growth and the second the change in mean output. A share may be negative or exceed one when the other component offsets growth.
\end{corollary}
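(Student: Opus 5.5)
The plan is to apply Proposition~\ref{prop:paper-accounting} year by year and then pass to geometric means. The argument is the same for both counting rules, so I treat the fractional case and note that the first-author case follows by replacing $(A_t,Q^A_t)$ with $(F_t,Q^F_t)$.

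\emph{Step 1 (annual identity in logs).} For every year $t$ in either comparison period, Proposition~\ref{prop:paper-accounting} gives $P_t=A_tQ^A_t$. All three quantities are positive by hypothesis, since each paper has at least one author and each active author has positive fractional output. Taking logarithms gives $\log P_t=\log A_t+\log Q^A_t$.

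\emph{Step 2 (geometric means).} Let $\tau$ be a period with year set $T_\tau$ and $|T_\tau|$ years. Then $\log\bar X_\tau=|T_\tau|^{-1}\sum_{t\in T_\tau}\log X_t$. Averaging the identity of Step~1 over $T_\tau$ yields $\log\bar P_\tau=\log\bar A_\tau+\log\bar Q^A_\tau$, which is equivalent to $\bar P_\tau=\bar A_\tau\bar Q^A_\tau$. This uses linearity of the arithmetic mean of logarithms, so the geometric mean is multiplicative.

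\emph{Step 3 (differences).} Subtracting the Step~2 identity for $\tau_0$ from the one for $\tau_1$ gives \eqref{eq:all-author}. The first-author version \eqref{eq:first-author} follows in the same way from $P_t=F_tQ^F_t$.

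\emph{Step 4 (contribution shares).} When $g_P\neq0$, dividing \eqref{eq:all-author} and \eqref{eq:first-author} by $g_P$ gives \eqref{eq:contribution-shares}. The remark about signs follows immediately: if $g_{Q^A}$ has the opposite sign to $g_P$, then $g_A/g_P=1-g_{Q^A}/g_P>1$.

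There is no real obstacle here. The only point that needs care is that the same year sets $\tau_0$ and $\tau_1$ are used for every quantity, so that the averaging in Step~2 runs over identical index sets. Positivity is also needed so that the logarithms are defined, and both of these are supplied by the hypotheses of the corollary.
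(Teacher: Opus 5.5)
Your proof is correct and matches the paper's argument essentially step for step: take logarithms of the annual identities $P_t=A_tQ^A_t=F_tQ^F_t$, average over the same years of each period to pass to geometric means, subtract the early-period identity from the late-period one, and divide by $g_P\neq0$. You add an explicit justification of the sign remark about the shares, which the paper leaves implicit. The paper in turn notes two extensions that your argument also covers: single-year comparisons ($|\tau|=1$) and annualizing all terms by the same interval.
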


Proofs are given in Appendix~\ref{sec:app-accounting-proofs}. The team-size identity also explains why mean fractional output falls when team size grows faster than the number of papers each author appears on.

We compute the contribution shares for each source and arXiv field. ICLR authors are identified by their names and are analyzed under fractional counting only. Because Eq.~\eqref{eq:all-author} holds in every field, the gap in annualized paper growth between two fields equals the gap in author growth plus the gap in the change in output per author, and we report the share of each part. To locate growth within CS, we divide each paper equally among its listed CS categories and compute each category's share of the increase in CS papers.

Author growth measures net expansion but does not distinguish new participation from replacement. The turnover analysis assigns each author active in arXiv CS in 2021 or 2025 to one \emph{activity path}. \emph{Continuing authors} are active in both years. \emph{New entrants} have an entry year after 2021 and are active in 2025. \emph{Returning authors} entered before 2021, have no arXiv-CS paper in 2021, and are active in 2025. \emph{Authors inactive in 2025}, or \emph{inactive authors}, are active in 2021 and have no arXiv-CS paper in 2025. Inactivity refers only to arXiv CS in 2025. New entrants and returning authors form the \emph{inflow}, and the four paths describe the turnover of the author population. Summing the paths' changes in fractional output recovers the change in papers from 2021 to 2025. To place the paths in the output distribution, we define two \emph{output tiers}. The \emph{high-output tier} contains authors with individual output above four under the counting rule analyzed ($q^A>4$ or $q^F>4$), and the \emph{low-output tier} contains authors with fractional output below one quarter ($q^A<0.25$).

Platform entry need not mark the start of a research career, so we classify an entrant's prior publication history as prior CS if it contains any CS publication, prior non-CS if all its publications lie outside CS, and no prior record observed if it is empty. Entrants without a linked external profile remain unresolved. To assess whether entry leads to sustained participation, we measure retention among first-time CS authors, grouped into \emph{entry cohorts} by the year of their first CS publication. We use the 2010--2022 cohorts, which have three observable follow-up years before 2025. For an author whose first CS publication appears in year $s$, let $y_k=1$ if the author has a CS paper in year $s+k$ and $y_k=0$ otherwise. \emph{Return} is $\max_{1\le k\le3}y_k$, \emph{Year~3} is $y_3$, and \emph{Active years} is $\frac13\sum_{k=1}^{3}y_k$. Each reported rate averages an outcome within each cohort and gives the thirteen cohorts equal weight. We compare authors whose entry paper includes an \emph{established CS coauthor}, a coauthor with an earlier CS publication, with authors on \emph{newcomer-only teams}, whose authors are all first-time CS authors.

The population mean also reflects changes in who publishes. Comparisons within \emph{historical output groups}, defined before the comparison period, examine how authors with similar past output subsequently differ. Under each counting rule, we rank the authors with positive output in at least one year of 2018--2020 by their total output over these years and divide the ranking into twenty five-percentile groups, with five one-percentile groups within the historical Top~5\%. In each later year, the \emph{active members} of a group are its members with positive output, and the \emph{group mean} averages their output. Because active membership can change, we also compare the \emph{same continuing authors}, the group members with positive output in both 2021 and 2025. To distinguish widening differences within historical groups from divergence between them, we split output variance among active members into mean within-group variance and the variance of group means, weighting both by the groups' shares of active members. Among the same continuing authors, the shares whose output rises or falls show how widespread the changes are. The mean, median, and ninetieth percentile describe output levels, while the Gini coefficient measures inequality. We also calculate the mean, variance, and Gini coefficient for the full annual population to assess whether the patterns within historical groups extend to all active authors. To relate the output distribution to the population mean, we express each output tier's contribution as its total output divided by the number of all active authors.

\subsection{Data Collection, Cleaning, and Validation}
\label{sec:identity}

We build annual records of papers and author output from the metadata of each source. For arXiv, we use the Cornell metadata snapshot version~295 (July~18, 2026), which contains 3.11 million records, and keep records with an identifier, a first-version (\texttt{v1}) date, and a usable author list. Reports and reviews posted to arXiv count as papers, and author lists and categories take their values in the snapshot. For DBLP, we use publication years and exclude CoRR preprints and informal records. For ICLR, we use archived OpenReview-derived submission records, assign each submission to its conference edition, whose deadline falls in the preceding year, and calibrate the counts to official totals. External publication records come from the June~2026 OpenAlex snapshot and its updates, ORCID identifiers, and DBLP links.

Identity resolution links the records of the same author across papers and sources. We remove duplicate arXiv records and duplicate authors within a paper while preserving author order, and we standardize names for character encoding, case, accents, and punctuation. DBLP merges two name variants only when at least two works linked by DBLP records or digital object identifiers (DOIs) support the same profile. For arXiv-CS entrants, we group candidate OpenAlex profiles that belong to the same researcher and accept a match through shared identifiers or through a calibrated trajectory model that compares publication histories (Appendix~\ref{sec:app-identity-resolution}). Prior publication histories use the deduplicated works of accepted profiles, and retention uses only works attributable to the individual author. Entrants without an accepted match remain in the arXiv counts with unresolved histories.

We validate the resulting records in four ways. We check paper totals, author appearances, and the accounting identities, calibrate ICLR against official counts, evaluate identity linkage on held-out authors excluded from model training, and measure how name resolution changes author counts (Appendices~\ref{sec:app-external}, \ref{sec:app-identity-resolution}, and~\ref{sec:app-sources}).

%% file: sections/04_growth_accounting.tex
\section{Analysis and Results}
\label{sec:analysis}
\label{sec:internal}

The aggregate decomposition identifies which component accounts for paper growth but leaves the author-level changes behind it to be explained. Section~\ref{sec:growth-accounting} establishes the contributions of author growth and output per author. To explain the dominant contribution of author growth, Section~\ref{sec:population} examines the turnover behind net expansion. Because turnover also changes the population represented by average output, Section~\ref{sec:productivity} relates this average to changes among continuing authors and examines their trajectories within groups defined by prior output.

\subsection{Active Authors and Individual Output}
\label{sec:growth-accounting}
\label{sec:author-growth}
\label{sec:byline-robustness}
\label{sec:comparison}

\begin{figure*}
\centering
\includegraphics[width=\textwidth]{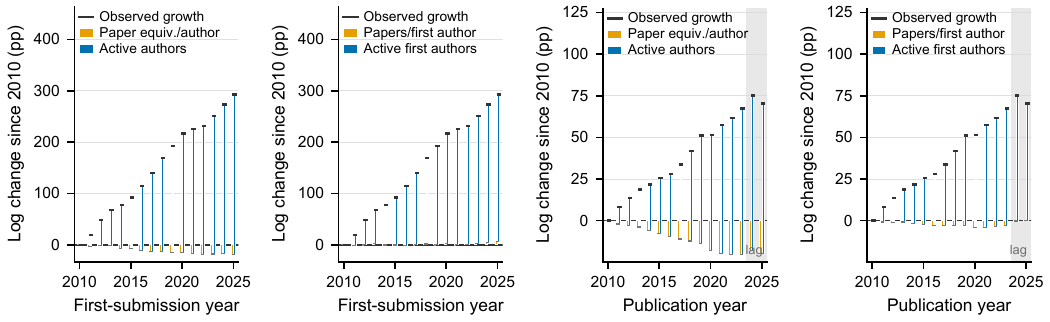}
\begin{tabularx}{\textwidth}{@{}*{4}{>{\centering\arraybackslash}X}@{}}
(a) arXiv CS, all authors & (b) arXiv CS, first-listed & (c) DBLP CS, all authors & (d) DBLP CS, first-listed
\end{tabularx}
\Description{Four panels decompose cumulative CS paper growth relative to 2010 into active-author growth and changes in output per author. The first two panels show arXiv CS under fractional and first-author counting, and the last two show DBLP under the same rules. Black marks show observed paper growth, blue segments show author growth, and orange segments show changes in output per author. Gray shading covers 2024--2025 in DBLP because these years are incompletely indexed.}
\caption{Cumulative contributions of author growth and output per author to CS paper growth. Shading marks recent DBLP years with incomplete indexing.}
\label{fig:cs-accounting}
\end{figure*}

\textbf{Growth in active authors accounts for nearly all long-run growth in CS papers, across preprints, conference submissions, and formal publications.} In arXiv CS and DBLP, author growth dominates cumulative paper growth under both counting rules throughout the study period (Figure~\ref{fig:cs-accounting}). Broad AI shows the same pattern, with author shares of $102.7\%$ under fractional counting and $97.4\%$ under first-author counting. For ICLR submissions, the fractional-counting share is $101.4\%$ (Appendix~\ref{sec:app-author-growth}).

\input{sections/productivity_events.tex}

\textbf{Faster author growth also accounts for most of the advantage of CS and Broad AI over Mathematics and Physics.} Differences in author growth account for $94\%\text{--}98\%$ of the gaps in paper growth between CS or Broad AI and the two comparison fields, and for $95\%$ of Broad AI's advantage over CS as a whole (Table~\ref{tab:growth-gaps}). Mathematics and Physics also show growing teams and declining fractional output per author (Appendix~\ref{sec:app-accounting}).

\input{tables/cross_field_growth_gaps.tex}

Within CS, \textbf{the faster-growing AI-related subfields produce over 70\% of the increase in papers.} When each paper is divided equally among its listed CS categories, the eight Broad AI categories supply $70.5\%$ of the growth in arXiv-CS papers from 2010--2012 to 2023--2025, and machine learning and computer vision together supply nearly two-fifths (Figure~\ref{fig:subfields}).

\begin{figure}
\centering
\includegraphics[width=\columnwidth]{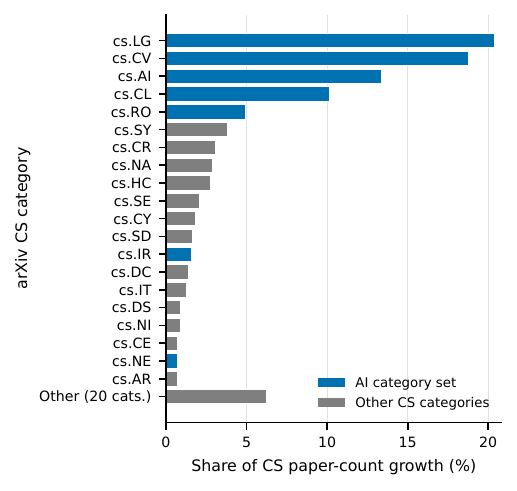}
\Description{CS category contributions to paper-count growth using fractional allocation across each paper's listed CS categories.}
\caption{CS subfield contributions to paper-count growth. Each paper is divided equally across its listed CS categories.}
\label{fig:subfields}
\end{figure}

%% file: sections/productivity_events.tex
\label{sec:events}

\textbf{The dominance of author growth continues after the release of ChatGPT.} From 2022 to 2025, a period of rapid uptake of LLM-assisted writing in science \cite{Liang2025}, the number of authors continues to supply most of the new papers under both counting rules (Table~\ref{tab:recent-growth-accounting}). Mean first-author output contributes a larger share than in the long-run comparison, so the recent surge combines continued author growth with a modest rise in output per author, which Section~\ref{sec:productivity} traces to individual authors.

\input{tables/recent_growth_accounting.tex}

%% file: tables/recent_growth_accounting.tex
\begin{table}
\centering
\caption{Contributions to CS log paper growth, 2022--2025.}
\label{tab:recent-growth-accounting}
\begin{tabular*}{\columnwidth}{@{\extracolsep{\fill}}lrr@{}}
\toprule
\textbf{\boldmath Convention} & \textbf{\boldmath Author share (\%)} & \textbf{\boldmath Mean share (\%)} \\
\midrule
Fractional & 98.9 & 1.1 \\
First-author & 91.1 & 8.9 \\
\bottomrule
\end{tabular*}
\end{table}

%% file: tables/cross_field_growth_gaps.tex
\begin{table}
\centering
\caption{Decomposition of growth differences across fields under fractional counting, 2010--2012 to 2023--2025.}
\label{tab:growth-gaps}
\begin{tabularx}{\columnwidth}{@{}>{\raggedright\arraybackslash}Xrrrrr@{}}
\toprule
\textbf{\boldmath Field pair} & \textbf{\boldmath Gap} & \multicolumn{2}{c}{\textbf{\boldmath Log pts/yr}} & \multicolumn{2}{c}{\textbf{\boldmath Gap share (\%)}} \\
\cmidrule(lr){3-4} \cmidrule(l){5-6}
 & \textbf{\boldmath (pts)} & \textbf{\boldmath Authors} & \textbf{\boldmath \shortstack{Per\\auth.}} & \textbf{\boldmath Authors} & \textbf{\boldmath \shortstack{Per\\auth.}} \\
\midrule
CS$-$Math & 13.36 & 13.14 & 0.22 & 98.35 & 1.65 \\
CS$-$Physics & 15.38 & 14.48 & 0.90 & 94.15 & 5.85 \\
Broad AI$-$CS & 9.53 & 9.06 & 0.48 & 94.98 & 5.02 \\
Broad AI$-$Math & 22.89 & 22.19 & 0.70 & 96.95 & 3.05 \\
Broad AI$-$Physics & 24.92 & 23.54 & 1.38 & 94.47 & 5.53 \\
\bottomrule
\end{tabularx}
\end{table}

%% file: sections/05_author_population.tex
\subsection{Growth in Active Authors}
\label{sec:population}
\label{sec:groups}
\label{sec:author-paths}
\label{sec:origins}
\label{sec:provenance}
\label{sec:recovered-histories}
\label{sec:retention}

\textbf{Net author growth conceals rapid turnover in the arXiv-CS author population.} The number of active authors nearly doubles from 2021 to 2025, while more than half of the 2021 authors publish no arXiv-CS paper in 2025 (Table~\ref{tab:author-path-output}). The inflow of new entrants and returning authors is therefore $57\%$ larger than the net increase in authors, so it replaces inactive authors and enlarges the population at the same time. By 2025, new entrants make up about two-thirds of all active authors, and continuing and returning authors make up the remaining third.

\input{tables/author_path_output.tex}

\begin{figure}
\centering
\includegraphics[width=\columnwidth]{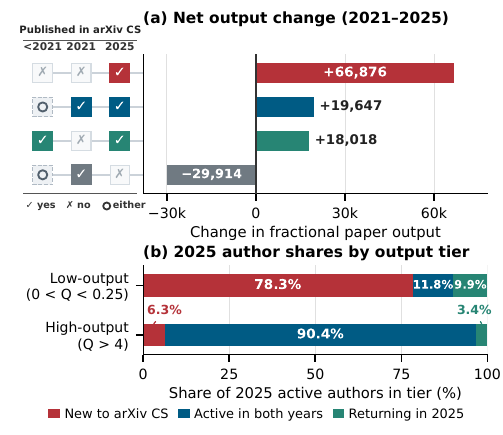}
\Description{Signed changes in fractional output by four author paths and stacked author shares in the low- and high-output tiers. New and returning authors supply net additions, while continuing authors dominate the high-output tier.}
\caption{Activity-path contributions to the 2021--2025 change in fractional output and path shares of the 2025 output tiers.}
\label{fig:bowen-trajectories}
\end{figure}

Because the output that inactive authors produced in 2021 is lost, \textbf{new entrants and returning authors supply more than the entire net increase in fractional output.} Together they contribute $113.8\%$ of the net gain from 2021 to 2025, which more than offsets the loss from inactive authors (Figure~\ref{fig:bowen-trajectories}a). Continuing authors add output as well, but most of the added output comes from the inflow.

\textbf{Most members of the low-output tier are new entrants, while most members of the high-output tier were already active in 2021.} The inflow contributes through its size despite its lower average output (Table~\ref{tab:author-path-output}). New entrants make up over three-quarters of the low-output tier, whereas about nine in ten members of the high-output tier ($q^A>4$) in 2025 were active in 2021 (Figure~\ref{fig:bowen-trajectories}b). Nearly three-fifths of the high-output tier's 2025 output comes from continuing authors who produced at most four fractional papers in 2021 (Table~\ref{tab:high-tail-transitions}).

\input{tables/high_tail_transitions.tex}

\begin{figure}
\centering
\includegraphics[width=\columnwidth]{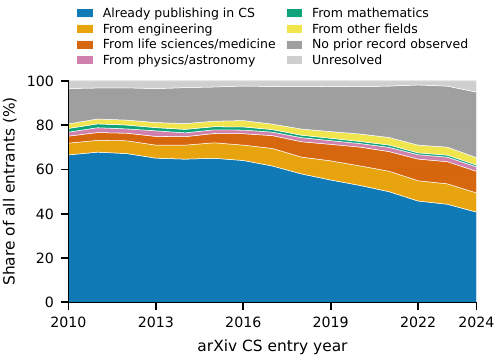}
\Description{Annual shares of arXiv-CS entrants by prior publication history, including prior CS, other fields, no observed prior publication, and unresolved history.}
\caption{Prior publication histories of arXiv-CS entrants. Prior works precede the entry calendar year.}
\label{fig:entrant-composition}
\end{figure}

Although new to arXiv CS, \textbf{roughly three-quarters of all arXiv-CS entrants from 2010 to 2024 already have a prior publication history.} Entrants who already publish in CS form the largest group (Figure~\ref{fig:entrant-composition}). Among entrants whose prior publications all lie outside CS, life sciences and medicine together with engineering account for nearly three-quarters. Entrants with no prior record observed and entrants with prior non-CS publications grow fastest, about three times as fast as entrants with prior CS publications between 2010 and 2024 (Table~\ref{tab:entrant-history-summary}). Among the 2024 entrants without earlier arXiv papers that can be linked to DBLP, over two-fifths already have formal CS publications (Appendix~\ref{sec:app-prior-record}).

Among authors publishing in CS for the first time, \textbf{collaboration with an established CS coauthor at entry is associated with more sustained participation.} In the complete 2010--2022 entry cohorts, authors whose entry paper includes an established CS coauthor are 19.2 percentage points more likely to return within three years than authors on newcomer-only teams, a ratio of $1.37$ (Table~\ref{tab:bowen-entry-return}). They are also about one and a half times as likely to publish in the third year and are active in $1.49$ times as many follow-up years.

%% file: tables/author_path_output.tex
\begin{table}
\centering
\caption{Author counts and mean fractional output by activity path, 2021--2025. Dashes mark an inactive endpoint.}
\label{tab:author-path-output}
\begin{tabular*}{\columnwidth}{@{\extracolsep{\fill}}lrrr@{}}
\toprule
\textbf{\boldmath Path} & \textbf{\boldmath Authors} & \textbf{\boldmath Mean 2021} & \textbf{\boldmath Mean 2025} \\
\midrule
New entrants & 203,928 & -- & 0.328 \\
Returning & 36,574 & -- & 0.493 \\
Continuing & 68,081 & 0.699 & 0.988 \\
Inactive in 2025 & 87,539 & 0.342 & -- \\
\bottomrule
\end{tabular*}
\end{table}

%% file: tables/high_tail_transitions.tex
\begin{table}
\centering
\caption{Members of the 2025 high-output tier ($q^A>4$) by their 2021 output. Shares use the tier's total fractional output in 2025.}
\label{tab:high-tail-transitions}
\begin{tabular*}{\columnwidth}{@{\extracolsep{\fill}}lrrr@{}}
\toprule
\textbf{\boldmath 2021 position} & \textbf{\boldmath Authors} & \textbf{\boldmath \shortstack{2025\\output}} & \textbf{\boldmath Share (\%)} \\
\midrule
Active in 2021, $0<q^A\leq4$ & 1,712 & 10,154.7 & 58.5 \\
Active in 2021, $q^A>4$ & 558 & 5,893.3 & 34.0 \\
Not active in 2021 & 242 & 1,306.2 & 7.5 \\
\bottomrule
\end{tabular*}
\end{table}

%% file: sections/06_individual_productivity.tex
\subsection{Changes in Individual Output}
\label{sec:productivity}
\label{sec:team-productivity}
\label{sec:productivity-distribution}
\label{sec:participation-concentration}
\label{sec:career-composition}

\begin{figure*}
\centering
\includegraphics[width=\textwidth]{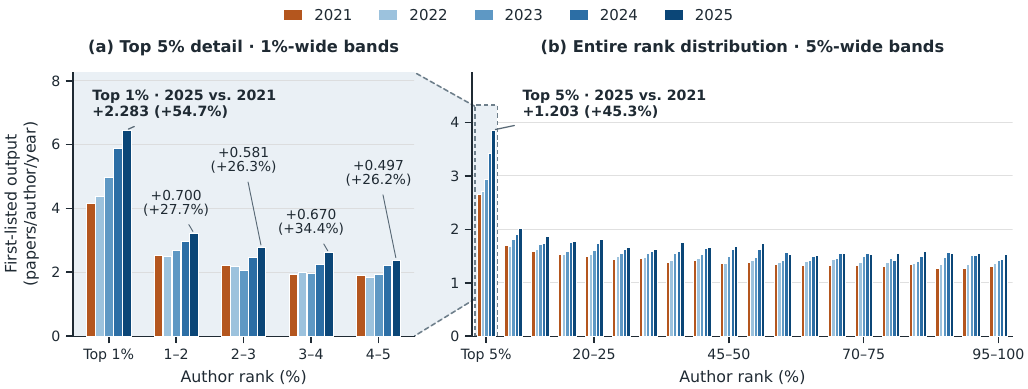}
\Description{First-author output among each year's active members of fixed historical output groups, including a detailed view of the historical top five percent.}
\caption{First-author output across historical output groups. Ranks are fixed by 2018--2020 output, and annual means include active group members.}
\label{fig:bowen-rank-bands}
\end{figure*}

\textbf{The lower output of new entrants keeps mean fractional output nearly unchanged while continuing authors produce more.} Continuing authors raise their mean fractional output by $41.3\%$ from 2021 to 2025, whereas new entrants in 2025 average about one-third of a continuing author's output and returning authors fall in between (Table~\ref{tab:author-path-output}). Because new entrants are the largest group of 2025 authors (Section~\ref{sec:population}), their lower output carries a large weight in the population mean.

Across groups of authors with similar past output, \textbf{active members of every historical group raise their mean first-author output from 2021 to 2025, with the largest gains at the top.} First-author counts measure these changes without dividing each paper's credit among coauthors. Active members of the historical Top~$5\%$ raise their group mean by nearly half, and those of the historical Top~$1\%$ raise it by more than half, roughly twice the relative gain of percentiles 1--5 (Figure~\ref{fig:bowen-rank-bands}).

The rising group means come with widening differences, and \textbf{most of the increase in output variance occurs within historical output groups, among authors with similar past output.} Within-group variance accounts for about nine-tenths of the increase in variance among active group members from 2021 to 2025 under both counting rules (Table~\ref{tab:historical-variance-growth}), and it still dominates when group weights are fixed at their 2021 values (Appendix~\ref{sec:app-distribution}). Because the active members of a group change from year to year, this comparison combines changes in individual output with changes in membership.

Holding membership fixed separates the two, and \textbf{the same continuing authors become more unequal within every historical output group.} The Gini coefficient of their output rises in all twenty groups under both counting rules. In the historical fractional Top~$5\%$, $56\%$ of these authors increase their output and $43\%$ reduce it, and first-author output also rises for some members and falls for others in every group (Appendix~\ref{sec:app-distribution}).

\begin{figure*}
\centering
\includegraphics[width=\textwidth]{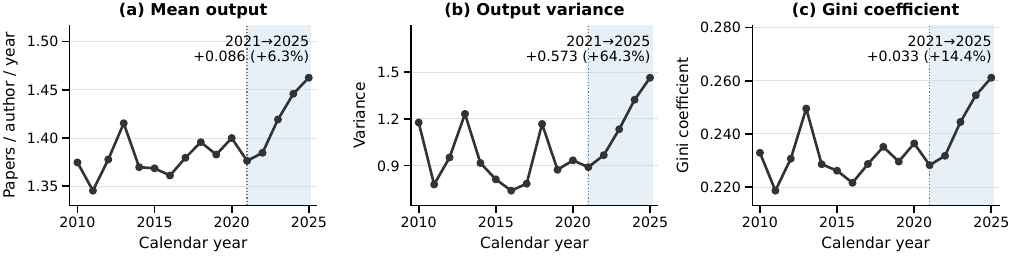}
\Description{Annual mean, population variance, and Gini coefficient of individual first-author output. Variance and Gini increase sharply from 2021 to 2025 while the mean rises more modestly.}
\caption{Annual mean and dispersion of first-author CS output. Each year includes authors with positive first-author output.}
\label{fig:bowen-annual-trends}
\end{figure*}

The uneven gains and losses are reflected in \textbf{a much faster rise in the mean than in the median within historical Top~5\% groups.} The same continuing authors in the historical first-author Top~$5\%$ raise their mean output by about a third from 2021 to 2025, while their median stays at two papers per year and their ninetieth percentile rises by two-thirds. In this group, authors whose output increases gain about twice as many papers on average as declining authors lose. The historical fractional Top~$5\%$ shows a similar contrast, with the mean rising by about a third and the median by less than one-tenth.

\input{tables/historical_variance_growth.tex}

The same divergence appears in the full annual population, \textbf{where the variance of first-author output grows ten times as fast as its mean from 2021 to 2025.} The Gini coefficient of first-author output also rises sharply (Figure~\ref{fig:bowen-annual-trends}), and under both counting rules, variance and the Gini coefficient record their largest four-year increases since 2010 (Appendix~\ref{sec:app-distribution}).

As output becomes more dispersed, \textbf{the high-output tier generates 70.6\% of the 2021--2025 increase in mean first-author output.} Under fractional counting, the high-output tier also raises the mean, but lower tiers offset this gain and leave mean fractional output nearly unchanged (Appendix~\ref{sec:app-mean-components}). The growing low-output share also extends beyond recent entrants, rising within every group of platform tenure, the years since an author's arXiv-CS entry (Appendix~\ref{sec:app-tenure}).

\input{tables/cs_output_margins.tex}

\textbf{Over the long run, team size grows faster than the number of papers each author appears on, so mean fractional output declines.} This longer-term decline contrasts with the nearly stable mean in 2021--2025. From 2010--2012 to 2023--2025, an active CS author's mean whole-count output rises by more than half, but mean team size grows 1.5 times as fast (Table~\ref{tab:cs-output-margins}). Because $Q^A=W/\bar n$ (Eq.~\eqref{eq:central-decomposition}), faster team growth lowers fractional output per author even as authors appear on more papers. DBLP and ICLR show the same combination of rising whole-count output, faster team growth, and declining fractional output (Appendix~\ref{sec:app-productivity}).

%% file: tables/historical_variance_growth.tex
\begin{table}
\centering
\caption{Variance growth among active members of historical output groups, 2021--2025.}
\label{tab:historical-variance-growth}
\begin{tabular*}{\columnwidth}{@{\extracolsep{\fill}}lrrr@{}}
\toprule
\textbf{\boldmath Credit} & \textbf{\boldmath \shortstack{Within\\groups}} & \textbf{\boldmath \shortstack{Between\\groups}} & \textbf{\boldmath \shortstack{Within\\share}} \\
\midrule
Fractional & 1.325 & 0.164 & 89.0\% \\
First-author & 3.303 & 0.371 & 89.9\% \\
\bottomrule
\end{tabular*}
\end{table}

%% file: tables/cs_output_margins.tex
\begin{table}
\centering
\caption{Whole-count output, team size, and mean fractional output in arXiv CS. Early and late are geometric means for 2010--2012 and 2023--2025.}
\label{tab:cs-output-margins}
\begin{tabular*}{\columnwidth}{@{\extracolsep{\fill}}lrrr@{}}
\toprule
\textbf{\boldmath Measure} & \textbf{\boldmath Early} & \textbf{\boldmath Late} & \textbf{\boldmath Change} \\
\midrule
Whole appearances/author $W$ & 1.54 & 2.36 & $+53.1\%$ \\
Authors/paper $\bar n$ & 2.64 & 4.75 & $+80.0\%$ \\
Fractional output/author $Q^A$ & 0.58 & 0.50 & $-14.9\%$ \\
\bottomrule
\end{tabular*}
\end{table}

%% file: sections/09_discussion.tex
\section{Discussion}
\label{sec:discussion}
\label{sec:projection}

Responding to publication growth requires attention to both community expansion and differences in individual trajectories. These two dimensions inform reviewing and participation policies, expectations in researcher evaluation, and efforts to ensure that the benefits of AI are shared equitably.

Author expansion makes the growing population of publishing researchers a potential source of additional reviewers. \textbf{This provides a rationale for reciprocal reviewing requirements as a potentially sustainable response to submission growth.} Such requirements have been adopted by ICML 2026 and ICLR 2027 \cite{ICML2026CallForPapers,ICLR2027AuthorGuidelines}. When growth comes primarily from more authors, the community gains potential reviewing capacity alongside additional review demand. Conferences can mobilize this capacity by recruiting qualified submitting authors, training new reviewers, and distributing assignments across the expanding author pool. The key is to convert author growth into completed, high-quality reviews through appropriate expertise matching and manageable workloads.

Given this potential reviewer supply, \textbf{entry restrictions should be assessed in terms of both potential workload relief and the risk of excluding capable contributors or constraining future reviewer supply.} ICLR 2027 permits each author to appear on at most one submission without an eligible reciprocal reviewer among its authors \cite{ICLR2027AuthorGuidelines,Andreas2026ICLRPolicies}. arXiv's January 2026 policy requires both an institutional email and prior authorship in the relevant arXiv domain for automatic endorsement, with personal endorsement as an alternative \cite{Boboris2026Endorsement}. These measures target reviewing capacity and moderation quality, respectively. Our finding that most arXiv-CS entrants already have publication histories highlights why their implementation should distinguish platform entry from research inexperience. This favors procedures that recognize relevant experience acquired outside the platform, while providing training and mentoring for genuine newcomers.

At the individual level, \textbf{rising average paper counts alone do not justify uniformly higher publication expectations in hiring and funding decisions.} Section~\ref{sec:productivity} shows widening output gaps even among continuing authors with similar past output. In the historical first-author Top~$5\%$, the group mean rises by about a third while the median author's output does not change. Applying this average increase as a common benchmark would therefore set expectations that do not reflect typical individual trajectories.

\textbf{Beyond accounting for these disparities in evaluation, we need to understand their causes, including the potential role of AI.} Potential explanations include unequal access to advanced models and computing resources, as well as differences in the skills needed to use AI effectively in research. Distinguishing affordability barriers from skills gaps could help identify appropriate responses. Future work could link researchers' AI access, usage, and training to their individual output trajectories and evaluate interventions such as subsidized access, shared infrastructure, and practical training. The goal is to enable researchers with different resources and skill levels to benefit equitably from AI.

%% file: sections/10_conclusion.tex
\section{Conclusion}
\label{sec:conclusion}

The computer science paper explosion shows how community expansion can sustain rapid paper growth while masking changes in individual output. Across arXiv preprints, ICLR submissions, and DBLP publications, growth in the number of publishing authors accounts for nearly all paper growth. Within arXiv CS, this expansion involves substantial turnover, with lower-output entrants keeping the population mean nearly unchanged despite a rise of over forty percent in continuing authors' mean fractional output. Even among continuing authors with similar past output, widening differences show that these gains are uneven. Understanding publication growth therefore requires following both the changing author population and the trajectories of its members, informing reviewer recruitment and evaluations that account for authors' prior output.

%% file: appendices/a_external_comparison.tex
\section{Growth Accounting Across Sources and Fields}
\label{sec:app-external}

This appendix compares the arXiv results of Sections~\ref{sec:growth-accounting} and~\ref{sec:productivity} with DBLP, ICLR, and the comparison fields. The DBLP data contain 5,084,864 formal publications from 2010 to 2025.

\subsection{Author Growth in DBLP and ICLR}
\label{sec:app-author-growth}
\label{sec:app-byline-robustness}

Under fractional counting, author growth contributes $134\%$ of paper growth in DBLP. Author growth also exceeds paper growth in the conference and journal subsets of DBLP, so mean fractional output contributes negatively in both late periods. First-author growth also dominates across these populations, as mean first-author output contributes negatively in the full corpus and conference subset while making a small positive contribution in the journal subset (Table~\ref{tab:dblp-arxiv-accounting}). In ICLR, submissions grow $23.9\times$ from 2017 to 2025, while the number of distinct submitting authors grows $25.0\times$, leaving mean fractional output nearly constant.

\input{tables/dblp_arxiv_accounting.tex}

\subsection{Preprints and Formal Publications}
\label{sec:growth-magnitudes}
\label{sec:venue-context}
\label{sec:app-source-coverage}
\label{sec:app-comparison-series}

The arXiv-CS paper count grows much faster than the DBLP count of formal CS publications from 2010 to 2023 ($12.3\times$ versus $2.0\times$). Over the same years, the share of active DBLP CS authors with a linked arXiv preprint rises by 25 percentage points, so wider use of preprints among formally publishing authors contributes to the faster growth of arXiv.

\subsection{Team Size and Output Across Fields and Sources}
\label{sec:app-accounting}
\label{sec:app-productivity}

In Broad AI, Mathematics, Physics, DBLP, and ICLR, authors appear on more papers per year, but mean team size grows faster, so mean fractional output declines in every case (Table~\ref{tab:whole-count-productivity}).

\input{tables/coauthorship_validation.tex}

%% file: tables/dblp_arxiv_accounting.tex
\begin{table}
\centering
\caption{Contribution shares in DBLP and arXiv over matched periods. The early period is 2010--2012, and the main late period is 2021--2023.}
\label{tab:dblp-arxiv-accounting}
\begin{tabular*}{\columnwidth}{@{\extracolsep{\fill}}lrrrr@{}}
\toprule
\textbf{\boldmath Quantity} & \textbf{\boldmath \shortstack{arXiv\\CS}} & \textbf{\boldmath \shortstack{DBLP\\all}} & \textbf{\boldmath \shortstack{DBLP\\conf.}} & \textbf{\boldmath \shortstack{DBLP\\jour.}} \\
\midrule
Growth (log pts/yr) & 19.40 & 4.99 & 2.79 & 7.60 \\
\quad Active authors (\%) & 107.81 & 133.96 & 174.65 & 113.49 \\
\quad Output per author (\%) & $-7.81$ & $-33.96$ & $-74.65$ & $-13.49$ \\
\midrule
\multicolumn{5}{@{}l}{\emph{First authors only (2010--2012 to 2021--2023)}} \\
\quad First authors (\%) & 99.07 & 105.14 & 117.16 & 96.56 \\
\quad Papers per first author (\%) & $+0.93$ & $-5.14$ & $-17.16$ & $+3.44$ \\
\midrule
\multicolumn{5}{@{}l}{\emph{Late period 2023--2025}} \\
Growth (log pts/yr) & 19.23 & 4.90 & 2.93 & 7.20 \\
\quad Active authors (\%) & 106.46 & 128.00 & 162.11 & 110.79 \\
\quad Output per author (\%) & $-6.46$ & $-28.00$ & $-62.11$ & $-10.79$ \\
\bottomrule
\end{tabular*}
\end{table}

%% file: tables/coauthorship_validation.tex
\begin{table}
\centering
\caption{Changes in whole-count output, fractional output, and team size across fields and sources.}
\label{tab:whole-count-productivity}
\begin{tabular*}{\columnwidth}{@{\extracolsep{\fill}}lrrr@{}}
\toprule
\textbf{\boldmath Population} & \textbf{\boldmath $W$} & \textbf{\boldmath $Q^A$} & \textbf{\boldmath $\bar n$} \\
\midrule
\multicolumn{4}{@{}l}{\emph{arXiv, 2010--2012 to 2023--2025}} \\
Broad AI & $+69.5\%$ & $-9.5\%$ & $+87.2\%$ \\
Mathematics & $+3.7\%$ & $-17.3\%$ & $+25.4\%$ \\
Physics & $+17.7\%$ & $-24.3\%$ & $+55.4\%$ \\
\midrule\multicolumn{4}{@{}l}{\emph{External sources, endpoint years}} \\
DBLP, 2010--2023 & $+9.9\%$ & $-18.5\%$ & $+34.8\%$ \\
ICLR, 2017--2025 & $+37.3\%$ & $-4.5\%$ & $+43.7\%$ \\
\bottomrule
\end{tabular*}
\end{table}

%% file: appendices/b_author_demography.tex
\section{Entrant Histories and Retention}
\label{sec:app-demography}

This appendix supports Section~\ref{sec:population}. The analysis of prior publication histories follows \emph{author keys}, which identify arXiv-CS entrants in the arXiv records, whereas the retention analysis follows deduplicated first-time CS authors.

Entrants with prior non-CS publications are classified by the fields of those publications. Among the 26,104 such entrants in 2024, life sciences and medicine account for $39.6\%$, engineering for $35.3\%$, physics and astronomy for $8.3\%$, and mathematics for $3.2\%$.

\label{sec:app-prior-record}
Direct linkage to DBLP shows that over two-fifths of the resolved 2024 entrants without earlier arXiv papers already have formal CS publications before their first arXiv submission (Table~\ref{tab:entrant-prior-record-dblp}). For all such entrants, the lower and upper bounds in Table~\ref{tab:entrant-prior-record-dblp} treat unresolved entrants as having no prior DBLP record or a prior DBLP record, respectively.

\input{tables/entrant_history_summary.tex}

\input{tables/entrant_prior_record_dblp.tex}

The retention comparison covers 158,870 first-time CS authors with an established CS coauthor and 15,331 on newcomer-only teams, and the first group has higher rates in all three outcomes (Table~\ref{tab:bowen-entry-return}).

\input{tables/entry_return_table.tex}

%% file: tables/entrant_history_summary.tex
\begin{table}
\centering
\caption{Growth of annual entrant counts by prior publication history, 2010--2024.}
\label{tab:entrant-history-summary}
\begin{tabular*}{\columnwidth}{@{\extracolsep{\fill}}lr@{}}
\toprule
\textbf{\boldmath History} & \textbf{\boldmath Growth} \\
\midrule
Prior CS & 6.9$\times$ \\
Prior non-CS & 19.9$\times$ \\
No prior record observed & 20.9$\times$ \\
Unresolved & 16.7$\times$ \\
\bottomrule
\end{tabular*}
\end{table}

%% file: tables/entrant_prior_record_dblp.tex
\begin{table}
\centering
\caption{Prior DBLP publications among 2024 CS entrants with no earlier arXiv record. All entries are percentages.}
\label{tab:entrant-prior-record-dblp}
\begin{tabular*}{\columnwidth}{@{\extracolsep{\fill}}lrrr@{}}
\toprule
\textbf{\boldmath Role} & \textbf{\boldmath Resolvable} & \textbf{\boldmath Prior / resolved} & \textbf{\boldmath Prior / all} \\
\midrule
All & 87.1 & 43.8 & 38.1--51.0 \\
First-author & 92.4 & 41.6 & 38.5--46.1 \\
\bottomrule
\end{tabular*}
\end{table}

%% file: tables/entry_return_table.tex
\begin{table}
\centering
\caption{Retention by entry-team composition. Rates average the 2010--2022 entry cohorts with equal weight.}
\label{tab:bowen-entry-return}
\begin{tabularx}{\columnwidth}{@{}>{\raggedright\arraybackslash}Xrrrr@{}}
\toprule
\textbf{\boldmath Entry team} & \textbf{\boldmath People} & \textbf{\boldmath Return} & \textbf{\boldmath Year 3} & \textbf{\boldmath \shortstack{Active\\years}} \\
\midrule
Established CS coauthor & 158,870 & 70.7\% & 41.4\% & 45.1\% \\
Newcomers only & 15,331 & 51.5\% & 27.7\% & 30.3\% \\
\bottomrule
\end{tabularx}
\end{table}

%% file: appendices/d_author_identity.tex
\section{Identity Resolution and Career Linkage}
\label{sec:app-identity-resolution}

The linkage procedure of Section~\ref{sec:identity} links nearly all author keys of arXiv-CS entrants to OpenAlex profiles (Table~\ref{tab:bowen-identity-coverage}). Shared identifiers and bibliographic evidence account for most accepted matches, and the trajectory model links additional authors with several papers. Entrants from 2025 have a higher unlinked share because external records index recent work with a delay.

On a held-out benchmark of 1,190 author keys and 5,397 groups of candidate OpenAlex profiles, the full linkage procedure, labeled Full framework in Figure~\ref{fig:bowen-identity-models}, ranks candidates more accurately than BEARD-fast \cite{Louppe2016} and Dedupe \cite{GreggEder2022}. Its area under the receiver operating characteristic (ROC) curve is the highest of the compared methods, and its acceptance decisions match the verified reference profiles with high precision and recall.

\input{tables/identity_coverage_table.tex}

\begin{figure}
\centering
\includegraphics[width=\columnwidth]{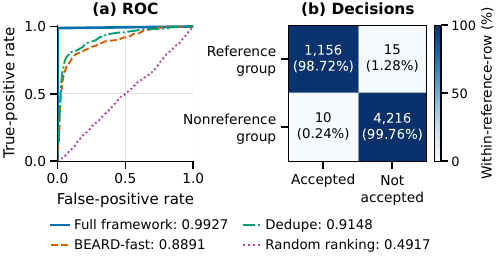}
\Description{ROC curves of the full linkage procedure, BEARD-fast, Dedupe, and random ranking, with the final acceptance decisions tabulated against reference matches.}
\caption{Held-out comparison of identity-linkage models and the final acceptance policy.}
\label{fig:bowen-identity-models}
\end{figure}

%% file: tables/identity_coverage_table.tex
\begin{table}
\centering
\caption{Identity-linkage routes and coverage. Route shares use all entrant keys, and multi-profile status overlaps acceptance routes.}
\label{tab:bowen-identity-coverage}
\begin{tabular*}{\columnwidth}{@{\extracolsep{\fill}}lrrr@{}}
\toprule
\textbf{\boldmath Route / coverage} & \textbf{\boldmath 2010--2024} & \textbf{\boldmath 2025} & \textbf{\boldmath All} \\
\midrule
Author keys & 647,354 & 131,322 & 778,676 \\
Direct-evidence route & 95.06\% & 79.50\% & 92.43\% \\
Trajectory route & 2.73\% & 9.39\% & 3.85\% \\
Unlinked & 2.22\% & 11.11\% & 3.72\% \\
\midrule
Linked total & 97.78\% & 88.89\% & 96.28\% \\
Multi-profile (overlap) & 16.63\% & 9.23\% & 15.39\% \\
\bottomrule
\end{tabular*}
\end{table}

%% file: appendices/c_output_distribution.tex
\section{Output Distribution Details}
\label{sec:app-distribution}

This appendix reports the distribution results behind Section~\ref{sec:productivity}.

Fractional output becomes more unequal while its population mean stays nearly unchanged. The population mean changes by $-1.0\%$ from 2021 to 2025, whereas fractional variance increases by $+60.8\%$ and the fractional Gini coefficient rises by $+11.3\%$, making 2021--2025 the largest four-year increase in both dispersion metrics since 2010.

The comparison of the same continuing authors covers 48,963 authors under fractional counting and 8,210 under first-author counting. Figure~\ref{fig:bowen-within-history} shows how their individual output moves in both directions within each historical output group.

\begin{figure*}
\centering
\includegraphics[width=\textwidth]{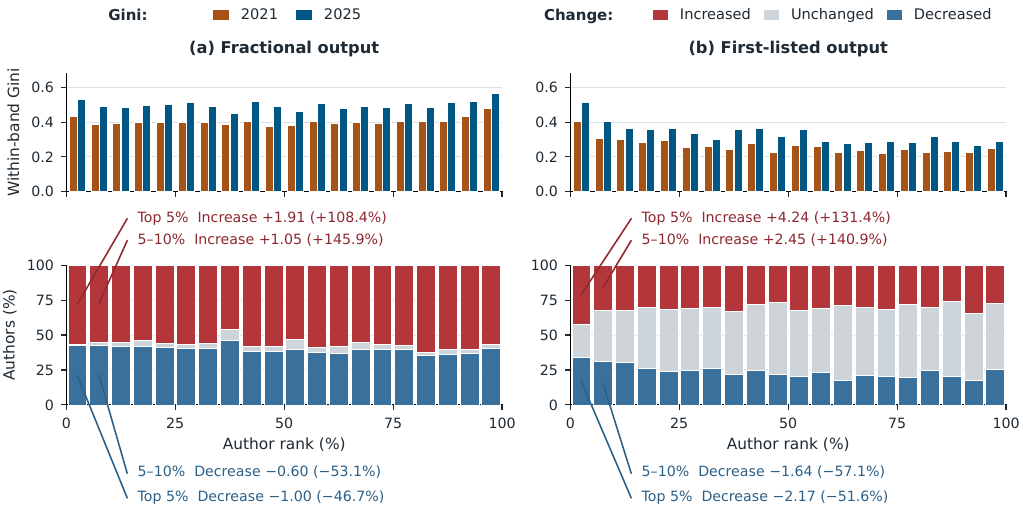}
\Description{Within-group Gini and shares of individual output increases, decreases, and unchanged output for continuing authors active in 2021 and 2025, assigned to fixed historical output groups.}
\caption{Output changes within historical output groups among the same continuing authors.}
\label{fig:bowen-within-history}
\end{figure*}

Within-group variance growth remains large when the weights of the twenty historical output groups are held at their 2021 shares of active members. Weighted within-group variance increases are $1.410$ under fractional counting and $3.258$ under first-author counting, closely matching the active-weighted decomposition in Table~\ref{tab:historical-variance-growth}.

\label{sec:app-mean-components}
Under fractional counting, the high-output tier adds 0.021 to the population mean from 2021 to 2025, while the other authors lower it by 0.026. Under first-author counting, both parts raise the mean (Table~\ref{tab:output-mean-components}).

\input{tables/output_mean_components.tex}

\label{sec:app-tenure}
The share of active authors in the low-output tier increases within every group of platform tenure between 2021 and 2025, rising by 7.4 to 9.9 percentage points (Table~\ref{tab:tenure-low-output}). Separating changes within tenure groups from changes in their population shares shows that increases within tenure groups add $8.71$ percentage points to the overall share, whereas shifts in tenure composition offset $1.17$ percentage points. Authors with a single CS paper carrying at least five authors make up $90.3\%$ of the low-output tier in 2025, which links the growth of this tier to larger coauthorship teams in every tenure group.

\input{tables/tenure_low_output.tex}

%% file: tables/output_mean_components.tex
\begin{table}
\centering
\caption{Contributions of output tiers to the change in population mean output, 2021--2025.}
\label{tab:output-mean-components}
\begin{tabular*}{\columnwidth}{@{\extracolsep{\fill}}lrrr@{}}
\toprule
\textbf{\boldmath Credit} & \textbf{\boldmath Above four} & \textbf{\boldmath Other} & \textbf{\boldmath Total} \\
\midrule
Fractional & $+0.0209$ & $-0.0260$ & $-0.0051$ \\
First-author & $+0.0608$ & $+0.0253$ & $+0.0861$ \\
\bottomrule
\end{tabular*}
\end{table}

%% file: tables/tenure_low_output.tex
\begin{table}
\centering
\caption{Shares of active authors in the low-output tier by platform tenure.}
\label{tab:tenure-low-output}
\begin{tabular*}{\columnwidth}{@{\extracolsep{\fill}}lrr@{}}
\toprule
\textbf{\boldmath Observed tenure} & \textbf{\boldmath 2021} & \textbf{\boldmath 2025} \\
\midrule
Entry year & 47.2\% & 57.1\% \\
Years 1--3 after entry & 30.3\% & 38.6\% \\
Year 4 onward & 19.9\% & 27.3\% \\
\bottomrule
\end{tabular*}
\end{table}

%% file: appendices/e_sources_sample_fields.tex
\section{Sample Summary and Category Definitions}
\label{sec:app-sources}
\label{sec:app-identity}
\label{sec:category-sets}

From 2010 to 2025, our arXiv sample covers more than 2.26 million distinct preprints and 1.63 million distinct authors in the four fields (Table~\ref{tab:analysis-scale}). Incidences ($H$) in Table~\ref{tab:analysis-scale} count author appearances on papers.

\input{tables/analysis_scale.tex}

The annual author-output records of Section~\ref{sec:identity} and the arXiv metadata counts under the inclusive field definitions differ by 3 CS papers in 2022 and by 44 in 2025. Analyses of activity paths, output distributions, and 2022--2025 growth use the author-output records, while long-run series use the metadata counts.

Broad AI comprises the eight categories \texttt{cs.AI}, \texttt{cs.CL}, \texttt{cs.CV}, \texttt{cs.IR}, \texttt{cs.LG}, \texttt{cs.MA}, \texttt{cs.NE}, and \texttt{cs.RO}. Mathematics includes any \texttt{math.*} label.

The Physics comparison field comprises the 13 arXiv archives \texttt{astro-ph}, \texttt{cond-mat}, \texttt{gr-qc}, \texttt{hep-ex}, \texttt{hep-lat}, \texttt{hep-ph}, \texttt{hep-th}, \texttt{math-ph}, \texttt{nlin}, \texttt{nucl-ex}, \texttt{nucl-th}, \texttt{physics}, and \texttt{quant-ph}.

Author counts and demographic shares change little under DBLP identity resolution. Name cleaning and resolution together reduce 2025 CS author counts by $1.4\%$ relative to raw names. Annual CS author counts differ by at most $0.30\%$ and entrant demographic shares by at most $0.20$ percentage points relative to normalized names.

%% file: tables/analysis_scale.tex
\begin{table}
\centering
\caption{Size of the retained arXiv analysis sample, 2010--2025. Field memberships overlap.}
\label{tab:analysis-scale}
\begin{tabular*}{\columnwidth}{@{\extracolsep{\fill}}lrrr@{}}
\toprule
\textbf{\boldmath Field} & \textbf{\boldmath \shortstack{Papers\\($P$)}} & \textbf{\boldmath \shortstack{Authors\\($A$)}} & \textbf{\boldmath \shortstack{Incidences\\($H$)}} \\
\midrule
Computer science & 852,481 & 787,999 & 3,560,065 \\
Broad AI & 542,584 & 569,869 & 2,505,497 \\
Mathematics & 620,747 & 297,685 & 1,423,674 \\
Physics & 1,055,753 & 904,413 & 7,446,670 \\
\midrule
Union of four fields & 2,260,466 & 1,635,624 & 11,638,959 \\
\bottomrule
\end{tabular*}
\end{table}

%% file: appendices/f_accounting_proofs.tex
\section{Proofs of the Accounting Identities and Growth Decomposition}
\label{sec:app-accounting-proofs}

\begin{proof}[Proof of Proposition~\ref{prop:paper-accounting}]
Reordering the finite sums gives
\begin{align*}
\sum_{a\in\mathcal{A}_t}q^A_{a,t}
&=\sum_{p\in\mathcal{P}_t}\sum_{a\in\mathcal{B}(p)}\frac{1}{n_p}=P_t,\\
\sum_{a\in\mathcal{F}_t}q^F_{a,t}
&=\sum_{p\in\mathcal{P}_t}1=P_t.
\end{align*}
Each paper contributes one unit of total credit under either rule. The definitions of the two means therefore give $A_tQ^A_t=F_tQ^F_t=P_t$. Finally,
\[
\frac{W_t}{\bar n_t}=\frac{H_t/A_t}{H_t/P_t}
=\frac{P_t}{A_t}=Q^A_t.
\]
All denominators are positive under the proposition's assumptions.
\end{proof}

\begin{proof}[Proof of Corollary~\ref{cor:growth-decomposition}]
Taking logarithms of the two paper-count identities gives $\log P_t=\log A_t+\log Q^A_t$ and $\log P_t=\log F_t+\log Q^F_t$. For any comparison period $\tau$, the geometric mean satisfies
\[
\log\bar X_\tau=\frac{1}{|\tau|}\sum_{t\in\tau}\log X_t.
\]
Averaging each identity over the same years with the same weights preserves its equality. Subtracting the early-period identity from the late-period identity gives Eqs.~\eqref{eq:all-author} and~\eqref{eq:first-author}. Single-year comparisons are the special case $|\tau|=1$. Division by $g_P\neq0$ gives the contribution shares. Annualizing all terms by the same interval preserves both decompositions and their shares.
\end{proof}